\newtheorem{theorem}{Theorem}
\newtheorem{corollary}[theorem]{Corollary}
\newtheorem{proposition}[theorem]{Proposition}
\theoremstyle{remark}
\renewcommand{\u}[1]{\underline{#1}}
\date{}
\title{An improvement of the upper bound for GKS communication game}
\author{Ivan Petrenko\thanks{The article
  was in part funded by RFBR according to the research project № 19-01-00563.}\\
  Moscow State University}
\begin{document}
\maketitle

\begin{abstract}
The GKS game was formulated by Justin Gilmer, Michal Kouck\'y, and Michael Saks in their research of the sensitivity conjecture. Mario Szegedy invented a protocol for the game with the cost of $O(n^{0.4732})$. Then a protocol with the cost of  $O(n^{0.4696})$ was obtained by DeVon Ingram who used a bipartite matching. We propose a slight improvement of Ingram's method and design a protocol with cost of $O(n^{0.4693})$.
\end{abstract}

\section{Introduction}

The GKS game is a three-player game where two players, Alice and Bob, play as a team  against their opponent Merlin. The game has two natural parameters $n,k$. In the first phase of the game Merlin and Alice fill an initially empty string $x$ of length $n$ with bits in the following fashion: First Merlin arbitrarily picks a permutation of the set $\{1,2,\dots,n\}$. Then he sends indices $1,2,\dots, n$  one-by-one in the chosen order to Alice. Once Alice receives an index $i$, she immediately has to pick a bit to fill the $i$th  position in the string, that is she chooses $x_i$. This goes on until only one empty position is left,  that position is filled by Merlin, not Alice. We will commonly refer to this bit as the last bit. The now-filled string is sent to Bob, whose goal is to determine the position filled by Merlin. To this end Bob chooses a set $S$ of $k$ natural numbers. Alice and Bob win if $S$ contains the position filled by Merlin, otherwise Merlin wins. Alice and Bob can communicate before the game starts but cannot communicate during the play.

A winning strategy for Alice and Bob in the $(k,n)$-game is called \emph{a $(k,n)$-strategy}. The minimal  $k$ for which there is a  $(k,n)$-strategy is called \emph{the cost of the $n$-game}.

The following proposition is very important for the study of the strategies for the GKS game:

\begin{proposition}\label{p1}
If there are a $(k, n)$-strategy and a $(k', n')$-strategy, then there exists a $(kk', nn')$-strategy.
\end{proposition}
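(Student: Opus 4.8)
The plan is to obtain the $(kk',nn')$-strategy by \emph{composing} the two given strategies. Split the $nn'$ positions of the big string into $n'$ consecutive blocks $B_1,\dots,B_{n'}$ of size $n$ each, and identify a position with a pair $(a,b)$, where $b\in[n']$ names the block and $a\in[n]$ names the cell inside it. Alice will run, in parallel and independently, one copy of the $(k,n)$-strategy inside each block $B_b$ — feeding that copy the cells of $B_b$ in the order Merlin reveals them — and, simultaneously, one copy of the $(k',n')$-strategy ``on the blocks'', the $n'$ blocks playing the role of the $n'$ cells of the outer game.

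The key design decision is \emph{when} an outer cell is considered revealed. I will declare that the outer cell attached to $B_b$ is revealed exactly at the step at which Merlin reveals the last of the $n$ cells of $B_b$. At that step Alice queries the outer $(k',n')$-strategy for an outer bit $x'_b$ and then fills that last cell of $B_b$ so as to encode $x'_b$, namely so that $\bigoplus_{a\in[n]}x_{(a,b)}=x'_b$; this is always possible because the other $n-1$ cells of $B_b$ are already filled and flipping any single bit flips the parity. For each of the first $n-1$ revealed cells of $B_b$ Alice simply plays the inner $(k,n)$-strategy. The only exception is the block $B_{b^*}$ whose last cell is the global last position: there Merlin, not Alice, fills that cell — which is exactly what we want, because $B_{b^*}$ is the last block to be completed and therefore plays the role of the \emph{outer-Merlin} cell, whose value Merlin is allowed to choose freely.

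Next I would check that both simulated games are legitimate plays. Inside every block $B_b$ the first $n-1$ revealed cells were filled by Alice according to the inner strategy and the last revealed cell was filled by someone (Alice's encoding choice, or Merlin when $b=b^*$); this is a valid $(k,n)$-play with the last revealed cell in the role of the ``last bit'', so Bob can compute from the final contents of $B_b$ a set $S_b\subseteq[n]$ with $|S_b|=k$ containing the inside-block index of that cell. On the outer side, the order in which the blocks get completed is a legitimate Merlin permutation of $[n']$; for $b\ne b^*$ the outer bit $x'_b$ was genuinely produced by the $(k',n')$-strategy and written into the string as the block parity, while $x'_{b^*}$ is whatever Merlin's last move made it; hence from the vector $x'=(x'_1,\dots,x'_{n'})$ of block parities — which Bob reconstructs — the outer strategy gives him a set $T\subseteq[n']$ with $|T|=k'$ and $b^*\in T$. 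Bob then outputs $\bigcup_{b\in T}\{(a,b):a\in S_b\}$; these sets are disjoint, so this has size exactly $kk'$, and it contains the global last position $(a^*,b^*)$ because $b^*\in T$ and $a^*\in S_{b^*}$.

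The main obstacle is the timing and consistency bookkeeping sketched above: one has to choose the moment of outer-revelation so that (i) the block actually containing the genuine last bit is forced into the outer-Merlin slot — this is why outer-revelation is tied to block \emph{completion} rather than to a block becoming one cell short — and (ii) every outer bit that Alice controls is written into the final string in a way Bob can read back, so that the play $x'$ Bob reconstructs really is the one Alice's outer strategy produced against the observed permutation. The parity encoding is forced on us here precisely because Bob does not see Merlin's revelation order and so cannot locate the ``free'' cell of a block directly; apart from that, the verification is routine.
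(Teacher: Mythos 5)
Your proof is correct: the block decomposition with the outer cell ``revealed'' at block completion, the parity encoding of the outer bit in each block's last revealed cell, and the final union of the sets $S_b$ over $b\in T$ all check out, including the key observations that the block containing Merlin's bit is the last one completed and that the global last position is also the last revealed cell of its own block. The paper itself gives no proof and only cites Szegedy [2]; your argument is essentially the standard composition proof used there, so you are taking the same route as the referenced source.
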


The proof of this proposition can be found in [2]. This result allows to use $(k,n)$-strategies constructed for small values of $n$ to obtain upper bounds for the cost of the $n$-game for arbitrary $n$. More precisely, if a $(K,N)$-strategy exists, then the cost of the $n$-game is $O(n^{\log_NK})$, where the constant hidden in O-notation depends on $K,N$. Szegedy proved a $O(n^{0.4732})$ upper bound using $(5,30)$-strategy [2]. Ingram's $O(n^{0.4696})$ bound is derived from a $(11,165)$-strategy [3].

\section{Results}
\begin{theorem}\label{t2}
There exists a $(9,108)$-strategy.
\end{theorem}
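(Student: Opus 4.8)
The plan is to stay inside the framework of Ingram~[3], in which producing a $(k,n)$-strategy is reduced to exhibiting a matching in a finite bipartite graph, and to make that reduction go through at the parameters $k=9$, $n=108$.

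First I would recall the reduction in a convenient form. One fixes, once and for all, a decomposition of the $n$ coordinates into blocks together with a \emph{local} rule by which Alice fills a freshly revealed coordinate: her choice of the bit at a coordinate $i$ is allowed to depend only on the block of $i$, on the bits she has already placed in that block, and on a bounded amount of ``pointer'' information carried between blocks; in particular it must not depend on the order in which Merlin reveals the coordinates, which is what makes the rule legal online. Since Merlin's final coordinate lies in exactly one block, Bob --- on receiving the completed string $x$ --- only needs, for each block $B$, a short list of coordinates of $B$ that is guaranteed to contain Merlin's coordinate whenever it happens to lie in $B$. Ingram's observation is that such lists of total size at most $k$ exist precisely when a certain bipartite graph $G_{k,n}$ --- one side indexing the ``local patterns'' Bob can observe, the other indexing the $k$ ``slots'' he may use, with an edge whenever a pattern may be safely charged to a slot --- has a matching saturating the pattern side. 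So the theorem reduces to the purely combinatorial assertion that $G_{9,108}$ has such a matching.

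Next I would set $G_{9,108}$ up concretely. The known constructions all take $n=3\binom{k}{2}$ (this is Szegedy's $(5,30)$ and Ingram's $(11,165)$), and since the resulting exponent $\log_{3\binom{k}{2}}k$ is minimized at $k=9$, the pair $k=9$, $n=108$ is the one to target; the ``slight improvement'' over~[3] amounts to moving from $k=11$ to the optimal $k=9$ and then establishing the matching there. So I would write out the blocks, enumerate the reachable local patterns, and for each pattern list the slots to which it may legally be charged. This is the point at which a small refinement of Ingram's accounting enters --- splitting the pattern classes more finely according to which boundary coordinates of a block are ``active'' and recycling a charge that the coarser analysis leaves slack --- and it is what buys the extra room. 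I would then verify Hall's condition for the resulting graph: every family of $t$ pattern classes admits at least $t$ admissible slots.

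Verifying Hall's condition is the step I expect to be the main obstacle. It is a finite but genuinely fiddly case check, and it is precisely here that the construction for $(9,108)$ parts ways with Ingram's for $(11,165)$; an off-by-one in the charge bookkeeping would collapse the whole improvement. Once the matching is in hand the remaining steps are routine: read Alice's strategy off the local rules (checking once more that it is order-oblivious in the required sense), let Bob decode by looking up the matched slot of each block's observed pattern and outputting the union of the corresponding lists, and observe that this union has size at most $9$ and always contains Merlin's coordinate. Combined with Proposition~\ref{p1} and the remark following it, Theorem~\ref{t2} then yields the $O(n^{0.4693})$ upper bound, though that conclusion lies outside the statement proved here.
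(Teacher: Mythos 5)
Your proposal is a plan, not a proof: every step that carries the actual mathematical content is deferred. You reduce the theorem to ``$G_{9,108}$ has a matching saturating the pattern side,'' but you never define the vertex sets or the edges of that graph concretely for $k=9$, $n=108$, never specify the ``finer splitting of pattern classes'' that is supposed to buy the extra room, and you explicitly flag the Hall-condition verification as the main obstacle without carrying it out. Existence of the required combinatorial object at these parameters is exactly what is at stake --- the exponent calculus telling you that $k=9$, $n=3\binom{9}{2}=108$ is the \emph{desirable} target says nothing about whether a strategy exists there --- so nothing in the proposal establishes the theorem.

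Moreover, the missing idea is not a bookkeeping refinement of Ingram's charging argument. The paper's construction works with $9$ blocks of length $12$ and an explicit set of $220$ codewords (one for each triple of ``underlined'' positions in a block), whose key property is a \emph{relaxed} error-correction requirement: distinct codewords remain distinct after flipping at most one \emph{non-underlined} bit. It is precisely this relaxation --- versus Ingram's use of codes correcting one error in an arbitrary position --- that makes block length $12$ (hence $n=108$, $k=9$) attainable at all; even so, the paper could only produce and verify such a table by computer search, not by a hand-checkable Hall-type case analysis. Your sketch also glosses over how Bob stays within a budget of $9$ across $9$ blocks: in the paper this rests on a case split (either exactly one block is an unaltered codeword, and Bob outputs its $9$ non-underlined positions, or every block carries one correctable error and Bob outputs the $9$ error positions, one per block), which is not captured by ``output the union of the per-block lists'' without further argument. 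Without an explicit construction (or an existence proof) of the matching/code and without this decoding analysis, the proposal has a genuine gap.
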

\begin{proof}
The proof relies heavily on a table that is presented in Appendix A and was found by a computer search. This table consists of 220 binary strings of length 12. We call those strings \emph{codewords}. In every codeword three positions are underlined, all those positions contain ones. An important property of the table is that every possible combination of three underlined positions out of $\binom{12} 3 = 220$ combinations appears exactly once. Another property that is important for us is the following:

\begin{proposition}\label{p3}
Let x and y be different codewords. Let $x'$ and $y'$ be strings obtained from $x$ and $y$, respectively, by flipping at most one non-underlined bit. Then $x' \ne y'$.
\end{proposition}

Informally this property can be understood as follows: the code corrects one error provided it appears in a non-underlined position.
This is where our construction differs from that proposed by Ingram [3], who used Hamming codes that correct one  error (in any position).
Unfortunately, it is rather difficult to verify this property by hand and we cannot provide any proof of this fact.
We verified this property by a computer program. The table itself was also found by that program.

We will now construct a $(9,108)$-strategy.
First, Alice divides the string of length 108 into 9 blocks of length 12.
Alice then fills each block independently from other blocks as follows:

1.  The three first received positions in each  block are filled by  ones.
        
2. Once those three positions are filled, find the (unique) codeword in which  
those three positions are underlined and assign this codeword to the block.
        
3. The remaining positions except the last one are filled as in the assigned codeword.
        
4. The last position is filled so that the resulting block \emph{differs} from the assigned codeword.

Observe that every block completed by Alice alone is a codeword with one error in a non-underlined position.
From Proposition 3 we can conclude that every such block is different  from each codeword.
So Bob can receive two kinds of strings:

1. Either the received string has a block which is a codeword.         
In that case Bob knows that this block contains the last filled position and that that position is not underlined.
Thus, he forms the set of 9 non-underlined positions in that block.
        
2. Or, there is no such block in the received string,
that is, all the blocks are codewords with one  error, and that error occurs
in a non-underlined position. Then, due to Proposition~\ref{p3}, for each block Bob can find the assigned codeword and
the position of error, that position was filled last. The set of those positions has size 9 and contains the last bit.

In both cases Bob can find a set of 9 positions containing the last bit.
\end{proof}

\begin{corollary}
The cost of a GKS game is less then $O(n^{\log_{108}9})=O(n^{0.4693})$
\end{corollary}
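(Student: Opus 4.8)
The plan is to deduce the corollary from Theorem~\ref{t2} by the standard scaling argument, so that essentially only a numerical check remains. First I would iterate Proposition~\ref{p1}: starting from the $(9,108)$-strategy furnished by Theorem~\ref{t2} and composing it with itself, an easy induction on $t$ shows that a $(9^{t},108^{t})$-strategy exists for every positive integer $t$.

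Second, I would record the elementary monotonicity observation that a $(k,N)$-strategy yields a $(k,n)$-strategy for every $n\le N$. Indeed, Alice and Bob can agree to simulate the $N$-game in which the surplus positions $n+1,\dots,N$ are imagined to be revealed first, in that fixed order, and filled with $0$'s; Alice then plays the $N$-strategy on the index sequence obtained by concatenating this fixed prefix with Merlin's genuine permutation of $\{1,\dots,n\}$, and Bob, after padding the received length-$n$ string with the known all-zero suffix, runs Bob's part of the $N$-strategy and intersects the resulting set with $\{1,\dots,n\}$. Because the surplus positions are all filled by Alice in the simulation, the position left to Merlin always lies in $\{1,\dots,n\}$, so the simulated play is a legitimate $N$-game, Bob's set of size at most $k$ contains the last bit, and, if one insists on a set of exactly $k$ elements, it may be padded arbitrarily. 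This is the reasoning behind the principle recalled after Proposition~\ref{p1}.

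Combining these, for a given $n$ I would take $t=\lceil\log_{108}n\rceil$, so that $108^{t}\ge n$, and apply the monotonicity observation to the $(9^{t},108^{t})$-strategy to obtain a $(9^{t},n)$-strategy. Hence the cost of the $n$-game is at most $9^{t}\le 9^{\log_{108}n+1}=9\,n^{\log_{108}9}$, which is $O(n^{\log_{108}9})$ with a constant depending only on the base strategy.

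The last step is the numerical one: since $108=2^{2}3^{3}$ we have $\log_{108}9=\dfrac{2\ln 3}{2\ln 2+3\ln 3}\approx 0.46928$, and in particular $\log_{108}9<0.4693$, so $O(n^{\log_{108}9})\subseteq O(n^{0.4693})$. There is no genuine obstacle in this corollary itself; all the difficulty is hidden in Theorem~\ref{t2} and ultimately in Proposition~\ref{p3}, which was verified by computer. The one point worth double-checking is that the new exponent $\log_{108}9$, although extremely close to Ingram's $\log_{165}11\approx 0.46963$, is genuinely smaller --- this is precisely the improvement being claimed --- and that it rounds up to a value not exceeding $0.4693$.
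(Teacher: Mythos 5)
Your proposal is correct and follows essentially the same route as the paper, which simply invokes Proposition~\ref{p1} together with the standard scaling principle (a $(K,N)$-strategy gives cost $O(n^{\log_N K})$, proved in [2]) applied to Theorem~\ref{t2}; your iteration-plus-padding argument and the numerical check $\log_{108}9\approx 0.46928<0.4693<\log_{165}11$ fill in exactly those details. One tiny wording fix: in the monotonicity step the imagined prefix positions should be filled with whatever bits the $N$-strategy itself dictates on that fixed prefix (values Bob can compute, since the strategy is deterministic and the prefix order is agreed), rather than literally with $0$'s, so that the simulated play really is a legitimate run of the $N$-strategy.
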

This bound is derived by applying Proposition~\ref{p1} to Theorem~\ref{t2}.

\appendix

\section{Appendix}

\begin{tabular}{|r|r|r|}
\hline

    1. \u1\u1\u1000100000
    &
    2. \u1\u11\u110000000
    &
    3. \u1\u111\u11011010
         \\
    4. \u1\u1010\u1010110
    &
    5. \u1\u10010\u100001
    &
    6. \u1\u100001\u11000
         \\
    7. \u1\u1001000\u1011
    &
    8. \u1\u10000000\u100
    &
    9. \u1\u101000010\u10
         \\
    10. \u1\u1000110001\u1
    &
    11. \u10\u1\u110100010
    &
    12. \u11\u10\u10111001
         \\
    13. \u10\u100\u1101100
    &
    14. \u11\u1001\u110100
    &
    15. \u10\u10001\u10110
         \\
    16. \u10\u100000\u1000
    &
    17. \u10\u1010000\u110
    &
    18. \u11\u10000100\u10
         \\
    19. \u10\u100000010\u1
    &
    20. \u110\u1\u10110010
    &
    21. \u100\u10\u1100111
         \\
    22. \u100\u101\u101000
    &
    23. \u110\u1000\u11001
    &
    24. \u110\u11110\u1010
         \\
    25. \u110\u100111\u100
    &
    26. \u111\u1011010\u11
    &
    27. \u100\u10000000\u1
         \\
    28. \u1100\u1\u1101101
    &
    29. \u1011\u10\u111101
    &
    30. \u1100\u110\u10000
         \\
    31. \u1000\u1000\u1100
    &
    32. \u1001\u10010\u101
    &
    33. \u1100\u100101\u10
         \\
    34. \u1000\u1000001\u1
    &
    35. \u10001\u1\u100000
    &
    36. \u10000\u10\u10011
         \\
    37. \u10110\u101\u1010
    &
    38. \u10000\u1000\u110
    &
    39. \u10101\u10100\u10
         \\
    40. \u11100\u100011\u1
    &
    41. \u110100\u1\u10111
    &
    42. \u100110\u11\u1000
         \\
    43. \u100100\u100\u100
    &
    44. \u100000\u1000\u10
    &
    45. \u101001\u10000\u1
         \\
    46. \u1000000\u1\u1010
    &
    47. \u1000111\u10\u101
    &
    48. \u1000001\u111\u11
         \\
    49. \u1011011\u1010\u1
    &
    50. \u10110000\u1\u111
    &
    51. \u10100100\u10\u11
         \\
    52. \u10001101\u100\u1
    &
    53. \u100100010\u1\u10
    &
    54. \u100101001\u10\u1
         \\
    55. \u1010110111\u1\u1
    &
    56. 0\u1\u1\u110101010
    &
    57. 0\u1\u10\u10100011
         \\
    58. 0\u1\u100\u1000010
    &
    59. 0\u1\u1100\u100001
    &
    60. 0\u1\u11000\u10101
         \\
    61. 0\u1\u100000\u1100
    &
    62. 0\u1\u1111010\u111
    &
    63. 0\u1\u10000001\u11
         \\
    64. 0\u1\u100011100\u1
    &
    65. 0\u10\u1\u10010100
    &
    66. 1\u10\u10\u1000000
         \\
    67. 1\u10\u111\u110001
    &
    68. 0\u10\u1001\u10000
    &
    69. 0\u11\u10001\u1000
         \\
    70. 0\u10\u101000\u101
    &
    71. 0\u10\u1011000\u10
    &
    72. 0\u10\u10000100\u1
         \\
    73. 0\u100\u1\u1001000
    &
    74. 1\u110\u11\u110010
    &
    75. 0\u110\u100\u11010
         \\
    76. 1\u110\u1110\u1000
    &
    77. 1\u101\u11000\u100
    &
    78. 0\u100\u111010\u11
         \\
    79. 1\u110\u1001001\u1
    &
    80. 0\u1000\u1\u110110
    &
    81. 0\u1000\u10\u10000
         \\
    82. 0\u1011\u101\u1001
    &
    83. 0\u1101\u1001\u110
    &
    84. 0\u1001\u10100\u11
         \\
    85. 1\u1011\u101111\u1
    &
    86. 0\u10011\u1\u11000
    &
    87. 0\u10000\u10\u1000
         \\
    88. 0\u10110\u101\u100
    &
    89. 0\u10111\u1001\u11
    &
    90. 0\u10001\u11000\u1
         \\
    91. 1\u110000\u1\u1111
    &
    92. 0\u101010\u11\u100
    &
    93. 0\u100000\u101\u10
         \\
    94. 1\u111010\u1000\u1
    &
    95. 1\u1000100\u1\u110
    &
    96. 1\u1100010\u10\u10
         \\
    97. 0\u1100110\u111\u1
    &
    98. 0\u11100100\u1\u10
    &
    99. 0\u10000100\u10\u1
         \\
    100. 0\u100000110\u1\u1
    &
    101. 11\u1\u1\u10010110
    &
    102. 10\u1\u11\u1001001
         \\
    103. 00\u1\u111\u100011
    &
    104. 10\u1\u1000\u10000
    &
    105. 00\u1\u10000\u1010
         \\
    106. 01\u1\u111100\u100
    &
    107. 10\u1\u1000000\u11
    &
    108. 01\u1\u10100001\u1
         \\
    109. 01\u11\u1\u1010000
    &
    110. 00\u10\u10\u101000
    &
    111. 10\u11\u110\u10100
         \\
      \hline
\end{tabular}

\begin{tabular}{|r|r|r|}
\hline
    112. 01\u11\u1100\u1011
    &
    113. 01\u10\u11000\u101
    &
    114. 00\u10\u100000\u10
         \\
    115. 00\u10\u1111000\u1
    &
    116. 00\u100\u1\u100000
    &
    117. 00\u110\u11\u10100
         \\
    118. 00\u100\u100\u1001
    &
    119. 01\u111\u1111\u101
    &
    120. 00\u110\u10100\u10
         \\
    121. 00\u101\u100001\u1
    &
    122. 10\u1100\u1\u10011
    &
    123. 00\u1100\u10\u1100
         \\
    124. 00\u1011\u101\u101
    &
    125. 00\u1000\u1010\u11
    &
    126. 00\u1100\u11111\u1
         \\
    127. 11\u10010\u1\u1000
    &
    128. 01\u10010\u10\u100
    &
    129. 00\u10000\u100\u11
         \\
    130. 00\u10001\u1010\u1
    &
    131. 10\u110001\u1\u100
    &
    132. 00\u100001\u11\u10
         \\
    133. 00\u101001\u110\u1
    &
    134. 00\u1001100\u1\u10
    &
    135. 11\u1100101\u10\u1
         \\
    136. 01\u10011100\u1\u1
    &
    137. 000\u1\u1\u1000110
    &
    138. 000\u1\u10\u101010
         \\
    139. 000\u1\u100\u11000
    &
    140. 000\u1\u1111\u1111
    &
    141. 001\u1\u10000\u100
         \\
    142. 010\u1\u100000\u10
    &
    143. 000\u1\u1011001\u1
    &
    144. 001\u10\u1\u101010
         \\
    145. 001\u10\u10\u11001
    &
    146. 000\u10\u100\u1000
    &
    147. 101\u10\u1000\u100
         \\
    148. 110\u11\u10000\u11
    &
    149. 000\u11\u100000\u1
    &
    150. 000\u100\u1\u11110
         \\
    151. 000\u100\u11\u1001
    &
    152. 000\u111\u100\u101
    &
    153. 100\u101\u1100\u10
         \\
    154. 001\u110\u10100\u1
    &
    155. 100\u1100\u1\u1011
    &
    156. 000\u1000\u10\u101
         \\
    157. 000\u1000\u100\u10
    &
    158. 001\u1100\u1000\u1
    &
    159. 000\u11000\u1\u101
         \\
    160. 100\u10010\u10\u11
    &
    161. 110\u11010\u100\u1
    &
    162. 100\u110001\u1\u10
         \\
    163. 111\u110000\u10\u1
    &
    164. 101\u1110100\u1\u1
    &
    165. 0100\u1\u1\u100100
         \\
    166. 0001\u1\u11\u10000
    &
    167. 1000\u1\u100\u1010
    &
    168. 0000\u1\u1010\u100
         \\
    169. 1010\u1\u11001\u11
    &
    170. 0000\u1\u100111\u1
    &
    171. 1010\u10\u1\u10000
         \\
    172. 0000\u10\u11\u1100
    &
    173. 1010\u10\u100\u101
    &
    174. 0000\u10\u1001\u10
         \\
    175. 0000\u10\u10000\u1
    &
    176. 0010\u110\u1\u1000
    &
    177. 0110\u101\u10\u100
         \\
    178. 0100\u101\u100\u10
    &
    179. 0100\u100\u1000\u1
    &
    180. 1001\u1110\u1\u100
         \\
    181. 0010\u1111\u10\u11
    &
    182. 0110\u1000\u100\u1
    &
    183. 1100\u10100\u1\u11
         \\
    184. 0111\u10101\u11\u1
    &
    185. 0011\u100001\u1\u1
    &
    186. 01110\u1\u1\u11000
         \\
    187. 00001\u1\u10\u1001
    &
    188. 00000\u1\u101\u100
    &
    189. 00000\u1\u1110\u10
         \\
    190. 00000\u1\u10001\u1
    &
    191. 10001\u11\u1\u1110
    &
    192. 10000\u10\u11\u100
         \\
    193. 11110\u11\u111\u10
    &
    194. 00000\u11\u1110\u1
    &
    195. 00111\u100\u1\u100
         \\
    196. 00011\u101\u10\u10
    &
    197. 11000\u100\u100\u1
    &
    198. 01000\u1011\u1\u11
         \\
    199. 00000\u1000\u10\u1
    &
    200. 00010\u10101\u1\u1
    &
    201. 101001\u1\u1\u1001
         \\
    202. 000000\u1\u10\u100
    &
    203. 010101\u1\u110\u11
    &
    204. 100000\u1\u1000\u1
         \\
    205. 010101\u10\u1\u101
    &
    206. 101010\u10\u11\u10
    &
    207. 100000\u10\u110\u1
         \\
    208. 010000\u101\u1\u10
    &
    209. 010110\u110\u10\u1
    &
    210. 000100\u1001\u1\u1
         \\
    211. 1110100\u1\u1\u100
    &
    212. 0100101\u1\u11\u11
    &
    213. 0000000\u1\u100\u1
         \\
    214. 0011101\u10\u1\u10
    &
    215. 1100000\u10\u10\u1
    &
    216. 0000100\u101\u1\u1
         \\
    217. 01011001\u1\u1\u10
    &
    218. 00000000\u1\u11\u1
    &
    219. 00010100\u10\u1\u1
         \\
    220. 110100000\u1\u1\u1
    &&\\
      \hline
\end{tabular}
\end{document}